\documentclass[runningheads]{llncs}
\usepackage[utf8]{inputenc}
\usepackage[english]{babel}
\usepackage[T1]{fontenc}
\usepackage{amsmath,amsthm}
\usepackage{amsfonts}
\usepackage{mathptmx}
\usepackage{amssymb}
\usepackage{float}
\usepackage{hyperref}
\usepackage{comment}
\usepackage{graphicx}
\usepackage{tikz}
\usepackage{mathtools}

\newtheorem{teo}{Theorem} 

\newtheorem{lem}[teo]{Lemma}
\newtheorem{pro}[teo]{Proposition}
\newtheorem{defe}[teo]{Definition}
\newtheorem{ex}[teo]{Example}

\newtheorem{rem}[teo]{Remark}
\newcommand{\resp}[1]{\ (resp. #1)}

\usepackage{xifthen}
\newcommand{\ifnv}[2]{\ifthenelse{\equal{#1}{}}{}{#2}}
\newcommand\sett[3][]{\left\{\left.#2\ifnv{#1}{\in #1}\vphantom{#3}\right|#3\right\}}

\newcommand{\len}[2][]{\left|#2\right|_{#1}}
\newcommand{\abs}[1]{\left|\vphantom{f^f_f}#1\right|}

\newcommand\dc{\mathfrak d_C}
\newcommand{\dll}{{d_L}}
\newcommand{\dhh}{d_{H}}
\newcommand\diam\theta
\newcommand\maxf[1]{\left\|{#1}\right\|}
\newcommand\minf[1]{\left|{#1}\right|}

\newcommand{\uinf}[1]{#1^\infty}

\newcommand{\N}{\mathbb N}
\newcommand{\Ns}{\mathbb N\setminus\{0\}}

\usepackage{stmaryrd}
\newcommand{\co}[2]{\left\llbracket #1,#2\right\llbracket}
\newcommand{\oc}[2]{\left\rrbracket #1,#2\right\rrbracket}

\newcommand\maxd[1][f]{d_{#1}^+}
\newcommand\mind[1][f]{d_{#1}^-}
\newcommand\defeq{:=}
\newcommand{\ie}{\textit{i.e.,}\ }

\newcommand\besi[1]{\mathfrak d_{#1}}
\newcommand\weyl[1]{\mathfrak{\hat d}_{#1}}

\newcommand{\ipart}[1]{\left\lfloor #1\right\rfloor}
\newcommand{\spart}[1]{\left\lceil #1\right\rceil}

\usepackage{fancyhdr}
\pagestyle{plain}

\begin{document}
\date{}
\title{Dill maps in the Weyl-like space associated to the Levenshtein distance.}
\author{Firas BEN RAMDHANE\textsuperscript{1} and Pierre GUILLON\textsuperscript{2}}
\institute{\begin{center}
	\textsuperscript{1,2}Aix Marseille Univ, CNRS, Centrale Marseille, I2M, Marseille, France, \\
	\textsuperscript{1}Sfax University, Faculty of Sciences of Sfax, Tunisia.\\
	\textsuperscript{1}\email{firasbenramdhane@math.cnrs.fr}. \\
	\textsuperscript{2}	\email{pguillon@math.cnrs.fr}.\\
\end{center}}
\maketitle

\begin{abstract}
The Weyl pseudo-metric is a shift-invariant pseudo-metric over the set of infinite sequences, that enjoys interesting properties and is suitable for studying the dynamics of cellular automata. It corresponds to the asymptotic behavior of the Hamming distance on longer and longer subwords.
In this paper we characterize well-defined dill maps (which are a generalization of cellular automata and substitutions) in the Weyl space and the sliding Feldman-Katok space where the Hamming distance appearing in the Weyl pseudo-metrics is replaced by the Levenshtein distance.

\keywords{The Weyl pseudo-metric  \and Feldman-Katok pseudo-metric \and Cellular automata \and Dill maps \and Substitutions \and Edit distances \and Levenshtein distance \and Symbolic dynamical systems \and non-compact dynamical systemms.}
\end{abstract}

%
\section{Basic definitions and notations}\label{s:def}
\paragraph{\textbf{Word combinatorics.}}
We fix once and for all an \emph{alphabet} $A$ of finitely many \emph{letters}.
A \emph{finite word} over $A$ is a finite sequence of letters in $A$; it is convenient to write a word as $u=u_{\co{0}{\len u}}$ to express $u$ as the concatenation of the letters $u_0,u_1,\ldots,u_{\len{u}-1}$, with $\len{u}$ representing the length of $u$, \ie the number of letters appearing in $u$, and $\co{0}{\len u}=\{0,...,|u|-1\}$.
The unique word of length $0$ is the empty word denoted by $\lambda$.
A \emph{configuration}  $x=x_0x_1x_2\ldots$ over $A$ is the concatenation of infinitely many letters from $A$.
The set of all finite \resp{infinite} words over $A$ is denoted by $A^*$ \resp{$A^\N$}, $A^n$ is the set of words of length $n\in \N$ and $A^+=A^*\setminus\{\lambda\}$.
\paragraph{\textbf{Topologies over $A^\N$.}}
Most classically, the set $A^\N$ is endowed with the product topology of the discrete topology on each copy of $A$. The topology defined on $A^\N$ is metrizable, corresponding to the \emph{Cantor distance} denoted by $\dc$ and defined as follows:
$$ \dc(x,y)=2^{-\min\sett{n\in \N}{x_n\neq y_n}}, \forall x\neq y\in A^\N, \text{ and } \dc(x,x)=0, \forall x\in A^\N.$$



Topological dynamical systems were studied using other topologies on the set of infinite words, such as the Besicovitch and Weyl spaces \cite{blanchard_cellular_1997} and the Feldman-Katok space \cite{ramdhane2022cellular}. The Weyl space and a similar space that are defined using pseudo-metrics depend on the two following distances are of interest to us in this paper.
\begin{defe}
\begin{enumerate}
\item The \emph{Hamming distance} denoted by $d_H$ and defined over finite words of the same length $u,v$ by: 
$d_H(u,v)=|\sett{i\in\co{0}{|u|}}{u_i\neq v_i}|.$
\item The \emph{Levenshtein distance} $\dll$ is defined over $u,v\in A^*$ as follows:
		\[\dll(u,v)=\frac{1}{2}\min\sett{m+m'}{\exists j_1<\cdots <j_m, j'_1<\cdots <j'_{m'}, D_{j_1}\circ\ldots \circ D_{j_m}(u)=D_{j'_1}\circ\ldots \circ D_{j'_{m'}}(v)},\]
		where  $D_j$ is \emph{deletion} operation at position $j\in \co{0}{\len{u}}$ is defined over word $u\in A^*$ as follows: $D_j(u)= u_0u_1\ldots u_{j-1}u_{j+1}\ldots u_{\len{u}-1}.$	
\end{enumerate}
\end{defe}
\begin{defe}
\begin{enumerate}
\item The Weyl pseudo-metric, denoted by $\weyl{H}$, is defined as follows:
	$$\weyl{H}(x,y)=\limsup_{\ell\to \infty}\max_{k\in \N} \frac{d_H(x_{\co{k}{k+\ell}},y_{\co{k}{k+\ell}})}{\ell}, \forall x,y\in A^\N,$$
\item 	The sliding pseudo-metric associated to the Levenshtein distance, denoted by $\weyl{L}$, is defined as follows:
	$$\weyl{L}(x,y)=\limsup_{\ell\to \infty}\max_{k\in \N} \frac{d_L(x_{\co{k}{k+\ell}},y_{\co{k}{k+\ell}})}{\ell}, \forall x,y\in A^\N.$$
\end{enumerate}
\end{defe}
It is easy to verify that these are pseudo-metrics \ie symmetric, zero over diagonal pairs, and satisfies the triangular inequality.
On the other hand, these are not distances since we can find two different configurations between which the pseudometric is worth zero (for example, we can take two configurations with finitely many of differences).
Hence, it is relevant to quotient the space of configurations by the equivalence of zero distance, in order to get a separated topological space:
\begin{defe}
The relation  $x \sim_{\weyl{*}} y \iff \weyl{*}(x,y)=0$, is an equivalence relation. The quotient space $A^\N /\sim_{\weyl{*}}$ called the Weyl space for $\weyl{*}=\weyl{H}$ and the sliding Feldman-Katok space when $\weyl{*}=\weyl{L}$, denoted $X_{\weyl{*}}$, where $\weyl{*}$ represent previous pseudo-metrics. We denote by $x_{\weyl{*}}$ the equivalence class of $x\in A^\N$ in the quotient space.
Any map $F$ of $A^\N$ to itself such that $\weyl{*}(x,y)=0\implies\weyl{*}(F(x),F(y))=0$ for all $x,y\in A^\N$, induces a well-defined map $F_{\weyl{*}}: X_{\weyl{*}}\to X_{\weyl{*}}$ over the quotient space.
A map $F:A^\N\mapsto A^\N$ is $\weyl{*}$-constant if for all $x,y\in A^\N$, $\weyl{*}(F(x), F(y))=0$.
\end{defe}
\paragraph{\textbf{Dill maps.}}
Dill maps were defined in \cite[Definition~2]{salo2015block}, and generelize both substitutions \cite{fogg2002substitutions} and cellular automata \cite{cant}. 
Here we give an equivalent definition.
\begin{defe}~\begin{enumerate}
\item A map $F: A^\N\mapsto A^\N$ is a \emph{dill map} if there exist a diameter $\diam\in \Ns$ and a \emph{local rule} $f~:~A^{\diam}\to A^+$ such that for all $x,y\in A^\N$:
$F(x)=f(x_{\co{0}{\diam}})f(x_{\co{1}{\diam+1}})f(x_{\co{2}{\diam+2}})\cdots.$
\item The \emph{lower norm} $\minf{f}$ and the \emph{upper norm} $\maxf{f }$ of a dill map $F $ with diameter $\diam$ and local rule $f$ are defined by: 
$\minf f=\min\sett{\len{f(u)}}{u\in A^\diam} \text{ and } \maxf f=\max\sett{\len{f(u)}}{u\in A^\diam}.$
\item
We extend the local rule into a self-map $f^*:A^*\to A^*$ by: 
$f^*(u)=
f(u_{\co{0}{\diam}})f(u_{\co{1}{1+\diam}})\ldots f(u_{\co{\len{u}-\diam}{\len u}}),$
for $u$ such that $\len u\ge \diam$ and $f^*(u)=\lambda$ if $\len u<\diam$.
\item
If $\maxf f=\minf f$, then we say that $F$ is \emph{uniform}.
\end{enumerate}
\end{defe}
When it is clear from the context, we may identify a dill map with its local rule.
\begin{rem}~
\begin{enumerate}
\item Uniform dill maps with $\minf f=\maxf f=1$ are called cellular automata. 
\item The local rule of a dill maps with diameter $\diam=1$ is called substitution. In this case, we denote $\tau$ for the local rule and $\bar\tau$ for the dill map.
\item The composition of a substitution $\tau$ and a cellular automaton local rule $f$ with diameter $\diam$ is a dill map local rule $\tau\circ f$ with diameter $\diam$.
\end{enumerate}
\end{rem}
\begin{ex}
\begin{enumerate}
\item The shift is the CA with diameter $\diam=2$ and local rule $f$ defined by $f(u_0u_1)=u_1$ for all $u_0,u_1\in A$.
\item Let $A=\{a,b\}$. The \emph{Xor} is the CA with diameter $\diam=2$ and local rule $f$ defined by:
$f(aa)=f(bb)=a$ and $f(ab)=f(ba)=b$.
\item The \emph{Fibonacci} substitution defined over $A$ by: 
$\tau(a)=ab \text{ and } \tau(b)=a.$
\item Let $f$ be the local rule of the Xor CA and $\tau$ be the Fibonacci substitution. Then $\tau\circ f$ is a local rule of a dill map with diameter $2$ and defined as follows:
\[\tau\circ f(aa)=\tau\circ f(bb)=ab \text{ and } \tau\circ f(ba)=\tau\circ f(ab)=a.\]
\end{enumerate}
\end{ex}
In the Cantor space, an elegant characterization of cellular automata was given by Curtis, Hedlund and Lyndon in \cite{hedlund1969endomorphisms} as follows: 
A function $F:A^\N\to A^\N$ is a cellular automaton if and only if it is continuous with respect to the Cantor metric and shift-equivariant (\ie $F(\sigma(x))=\sigma(F(x))$, for all $x\in A^\N$).
Similarly to the case of cellular automata, we gave a characterization of dill maps à la Hedlund.
Recall that $\N$ can be naturally endowed with the discrete topology.
\begin{teo}[{\cite[Theorem 11]{ramdhane2022cellular}}]\label{dill cant}
A function $F:A^\N\to A^\N$ is a dill map if and only if it is continuous over the Cantor space and there exists a continuous map $s:A^\N\to \N$ such that for all $x\in A^\N$: $F(\sigma(x))=\sigma^{s(x)}(F(x)).$
\end{teo}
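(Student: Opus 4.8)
The plan is to prove the two implications separately: the forward one by unwinding the definition of a dill map, the converse by reconstructing a diameter and a local rule from the bounded anticipation of $F$ and of $s$. For the forward direction, suppose $F$ is a dill map with diameter $\diam$ and local rule $f\colon A^\diam\to A^+$. For every $n$, the word $f(x_{\co{0}{\diam}})f(x_{\co{1}{\diam+1}})\cdots f(x_{\co{n-1}{n-1+\diam}})$ has length at least $n$ and is determined by $x_{\co{0}{n-1+\diam}}$, so $F(x)_{\co{0}{n}}$ depends only on $x_{\co{0}{n-1+\diam}}$; hence $F$ is continuous over the Cantor space. Setting $s(x)\defeq\len{f(x_{\co{0}{\diam}})}$ yields a locally constant — hence continuous — map into $\Ns$, and deleting its first $s(x)$ letters turns $F(x)=f(x_{\co{0}{\diam}})f(x_{\co{1}{\diam+1}})\cdots$ into $f(x_{\co{1}{\diam+1}})f(x_{\co{2}{\diam+2}})\cdots=F(\sigma(x))$, i.e.\ $F(\sigma(x))=\sigma^{s(x)}(F(x))$.

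For the converse, assume $F$ is continuous and $s\colon A^\N\to\N$ is continuous with $F(\sigma(x))=\sigma^{s(x)}(F(x))$ for all $x$; as in the forward direction (and as used below), $s$ is positive. By uniform continuity on the compact Cantor space there is an integer $m$ with $s(x)$ depending only on $x_{\co{0}{m}}$; let $M\defeq\max_{x}s(x)$ and let $r$ be an integer such that $F(x)_{\co{0}{M}}$ depends only on $x_{\co{0}{r}}$. Put $\diam\defeq\max(m,r)$ (enlarged to $\ge 1$ if need be) and define $f\colon A^\diam\to A^+$ by $f(u)\defeq F(x)_{\co{0}{s(x)}}$ for any $x$ with $x_{\co{0}{\diam}}=u$: since $s(x)\le M$, both $s(x)$ and the prefix $F(x)_{\co{0}{s(x)}}$ of $F(x)_{\co{0}{M}}$ are functions of $x_{\co{0}{\diam}}=u$ only, so $f$ is well defined, and $\len{f(u)}=s(x)\ge 1$ gives $f(u)\in A^+$.

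It remains to show that $F$ is the dill map with this diameter and local rule. Iterating the shift relation gives $F(\sigma^{k}(x))=\sigma^{s_{k}(x)}(F(x))$ for all $k\in\N$, with $s_{k}(x)\defeq\sum_{i=0}^{k-1}s(\sigma^{i}(x))$; thus $F(x)_{s_{k}(x)+j}=F(\sigma^{k}(x))_{j}$ for all $j$, so the block of $F(x)$ between consecutive cut points is $F(x)_{\co{s_{k}(x)}{s_{k+1}(x)}}=F(\sigma^{k}(x))_{\co{0}{s(\sigma^{k}(x))}}=f(x_{\co{k}{k+\diam}})$, the last equality by the definition of $f$ applied to $\sigma^{k}(x)$, whose length-$\diam$ prefix is $x_{\co{k}{k+\diam}}$. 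Since each $s(\sigma^{i}(x))\ge 1$, the cut points $0=s_{0}(x)<s_{1}(x)<\cdots$ exhaust $\N$, so the intervals $\co{s_{k}(x)}{s_{k+1}(x)}$ partition it, and concatenating the corresponding blocks of $F(x)$ in order yields $F(x)=f(x_{\co{0}{\diam}})f(x_{\co{1}{\diam+1}})f(x_{\co{2}{\diam+2}})\cdots$. The heart — and main obstacle — is this last step: turning the single shift relation into the full sliding-window product by iterating it into the cocycle $(s_{k})_k$ and recognising the induced block decomposition of $F(x)$. Positivity of $s$ is exactly what makes the cut points exhaust $\N$ (otherwise a constant map onto an aperiodic configuration would satisfy the hypotheses with $s\equiv 0$ without being a dill map). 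Extracting the uniform windows $m,M,r$ from continuity via compactness and checking independence of $f$ from the chosen representative are the remaining, routine points.
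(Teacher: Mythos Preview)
The paper does not prove this theorem; it is merely quoted from \cite{ramdhane2022cellular} as background, so there is no proof in the paper to compare against. Your argument is the natural one and is correct: the forward direction is immediate from the definition, and for the converse you extract a uniform window from compactness, define $f(u)=F(x)_{\co0{s(x)}}$ for any $x$ extending $u$, and recover the sliding-block decomposition by iterating the shift relation into the cocycle $s_k(x)=\sum_{i<k}s(\sigma^i(x))$.

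You are also right to flag the positivity issue. As stated here, $s$ ranges over $\N$ (which in this paper includes $0$), and with that reading the ``if'' direction is false: a constant map onto an aperiodic configuration is continuous and satisfies $F(\sigma(x))=\sigma^0(F(x))$, yet it is not a dill map, since any dill map with constant local rule has periodic image. Your proof needs $s(x)\ge1$ precisely so that the cut points $s_k(x)$ exhaust $\N$ and so that $f(u)\in A^+$. The intended statement (as in the cited source) should have $s\colon A^\N\to\Ns$; with that correction your proof goes through verbatim.
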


Before proving our main results, let us mention that this paper is a continuation of our previous work \cite{ramdhane2022cellular}, and we suggest that the reader check it out.
\section{Lipschitz property of dill maps with respect to $\weyl{H}$}
It is known since \cite{blanchard_cellular_1997} that every cellular automaton induces a (well-defined) Lipschitz function over the Weyl space. 
Some dill maps, on the contrary, are not well-defined.
\begin{ex}\label{x:fibobes}
	The Fibonacci substitution is not well-defined over the Weyl space $X_{\weyl{H}}$.
	For example, $\weyl{H}(\uinf a,b\uinf a)=0$ but $\weyl{H}(\overline{\tau}(\uinf a),\overline{\tau}(b\uinf a))=\weyl{H}(\uinf{(ab)},\uinf{(ba)})=1$.
\end{ex}
Let us denote, for a uniform dill map $F$ with local rule $f$ and diameter $\diam$:
$$ \maxd=\max\sett{\dhh(f(u),f(v))}{u,v\in A^\diam} \text{ and } \mind=\min\sett{\dhh(f(u),f(v))}{u\ne v\in A^\diam}.$$ 
\begin{lem}\label{lip:lemma}
Let $F$ be a uniform dill map with diameter $\diam$ and local rule $f$. Then for all $\ell,k\in \N$, for $m=\spart{\frac{k}{\minf f}}$, and for $p=\ipart{\frac{\ell+k}{\minf f}}-(m+1)$: 
$$\dhh(F(x)_{\co{k}{k+\ell}},F(y)_{\co{k}{k+\ell}})\le \dhh(x_{\co{m}{m+p+\diam}},y_{\co{m}{m+p+\diam}})\diam\maxd+2\minf f, \forall x,y\in A^\N.$$
\end{lem}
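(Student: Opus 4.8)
The plan is to leverage uniformity: since $\minf f=\maxf f$, every image block $f(u)$ has a common length $L\defeq\minf f$, so $F(x)$ is the concatenation of the blocks $f(x_{\co{i}{i+\diam}})$, the $i$-th of which occupies the coordinate interval $\co{iL}{(i+1)L}$ (and likewise for $F(y)$, position by position). First I would identify which of these blocks are entirely contained in the observation window $\co{k}{k+\ell}$: block $i$ is contained in it exactly when $iL\ge k$ and $(i+1)L\le k+\ell$, that is, exactly when $i\in\cc{m}{m+p}$, using that $m=\spart{k/L}$ and that $m+p+1=\ipart{(k+\ell)/L}$. The coordinates of $\co{k}{k+\ell}$ not covered by these complete blocks form an initial segment $\co{k}{mL}$ and a final segment $\co{(m+p+1)L}{k+\ell}$, each of length strictly less than $L=\minf f$ by the defining inequalities of $\spart{\cdot}$ and $\ipart{\cdot}$ (a short computation also gives $p\ge-1$, and when $p=-1$ the complete-block range is empty and these two short segments already cover the window, so no case analysis is really needed).

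Splitting the Hamming distance along this decomposition gives
$$\dhh\bigl(F(x)_{\co{k}{k+\ell}},F(y)_{\co{k}{k+\ell}}\bigr)\le 2\minf f+\sum_{i=m}^{m+p}\dhh\bigl(f(x_{\co{i}{i+\diam}}),f(y_{\co{i}{i+\diam}})\bigr),$$
where the term $2\minf f$ absorbs the contributions of the two short boundary segments (each bounded by its own length). In the sum, the $i$-th term is $0$ when $x_{\co{i}{i+\diam}}=y_{\co{i}{i+\diam}}$ and at most $\maxd$ otherwise, by definition of $\maxd$; hence the sum is at most $\maxd\cdot\card S$, where $S=\sett{i\in\cc{m}{m+p}}{x_{\co{i}{i+\diam}}\ne y_{\co{i}{i+\diam}}}$.

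The one step requiring a genuine argument is the bound $\card S\le\diam\cdot\dhh\bigl(x_{\co{m}{m+p+\diam}},y_{\co{m}{m+p+\diam}}\bigr)$, which I would obtain by double counting incidences. Put $T=\sett{j\in\co{m}{m+p+\diam}}{x_j\ne y_j}$, so that $\card T=\dhh(x_{\co{m}{m+p+\diam}},y_{\co{m}{m+p+\diam}})$. Every $i\in S$ admits a witness $j\in\co{i}{i+\diam}$ with $x_j\ne y_j$, and $\co{i}{i+\diam}\subseteq\co{m}{m+p+\diam}$ forces $j\in T$; conversely, a fixed $j$ lies in $\co{i}{i+\diam}$ for exactly $\diam$ values of $i$, hence is a witness for at most $\diam$ elements of $S$. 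Counting the incidences from both sides yields $\card S\le\diam\,\card T$, and substituting back into the previous display gives exactly the claimed inequality.

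I expect the main obstacle to be the index bookkeeping of the first paragraph — confirming that the complete blocks are precisely those indexed by $\cc{m}{m+p}$ for the stated $m$ and $p$, and that the two leftover segments each have length below $\minf f$ — rather than any dynamical subtlety; once the decomposition is pinned down, the remainder is the elementary incidence count above.
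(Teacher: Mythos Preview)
Your proposal is correct and follows essentially the same approach as the paper: decompose the window $\co{k}{k+\ell}$ into two boundary pieces of length at most $\minf f$ plus the complete blocks indexed by $\cc{m}{m+p}$, use additivity of $\dhh$, bound each block contribution by $\maxd$ when the corresponding $\diam$-windows differ, and then double-count incidences between differing windows and differing positions to get the factor $\diam$. The paper presents the double-counting as a swap of summation order rather than an explicit incidence argument, but this is only a cosmetic difference.
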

\begin{proof}
Let $x,y\in A^\N$ and $\ell,k\in \N$. Since $F$ is uniform, we can write: 
$$F(x)_{\co{k}{k+\ell}}= vf^*(x_{\co{m}{m+p+\diam}})w, \text{ and }  F(y)_{\co{k}{k+\ell}}= v'f^*(y_{\co{m}{m+p+\diam}})w',$$
where $|v|=|v'|\le\minf f$ and $|w|=|w'|\le\minf f$.
By additivity, we can write then: 
\begin{align*}
  \dhh(F(x)_{\co{k}{k+\ell}},F(y)_{\co{k}{k+\ell}})-2\minf f &\le \sum_{i=0}^{p} \dhh(f(x_{\co{m+i}{m+i+\diam}}),f(y_{\co{m+i}{m+i+\diam}}))\\
  &\le \sum_{\begin{subarray}{c} i=0 \\ x_{\co{m+i}{m+i+\diam}}\neq y_{\co{m+i}{m+i+\diam}}\end{subarray}}^{p} \dhh(f(x_{\co{m+i}{m+i+\diam}}),f(y_{\co{m+i}{m+i+\diam}})) \\ 
& \le \sum_{\begin{subarray}{c} i=0 \\ \exists j\in \co{m+i}{m+i+\diam},  x_j\neq y_j\end{subarray}}^{p} \maxd \quad \leq  \sum_{\begin{subarray}{c}  j\in \co{m}{m+p+\diam}\\  x_j\neq y_j\end{subarray}}\sum_{i\in \oc{j-m-\diam }{j-m}} \maxd \\
&\le \sum_{\begin{subarray}{c}  j\in \co{m}{m+p+\diam}\\  x_j\neq y_j\end{subarray}} \diam  \maxd =  \dhh(x_{\co{m}{m+p+\diam}},y_{\co{m}{m+p+\diam}})\diam\maxd.
\qed\popQED
\end{align*}
\end{proof}

\begin{pro}\label{p:subbesi}
Let $F$ be a uniform dill map with diameter $\diam$ and local rule $f$. Then: 
$$\weyl{H}(F(x),F(y))\leq \frac{\diam \maxd}{\minf{f}}\cdot\weyl{H}(x,y), \forall x,y\in A^\N.$$
\end{pro}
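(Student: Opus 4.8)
The plan is to obtain the bound directly from Lemma~\ref{lip:lemma} by dividing through by $\ell$, taking the maximum over $k$, and then letting $\ell\to\infty$. Fix $x,y\in A^\N$ and, for $\ell'\in\Ns$, abbreviate $g(\ell')=\max_{k'\in\N}\frac{\dhh(x_{\co{k'}{k'+\ell'}},y_{\co{k'}{k'+\ell'}})}{\ell'}$, so that $\weyl{H}(x,y)=\limsup_{\ell'\to\infty}g(\ell')$. The whole point is to recognize the Hamming term produced by Lemma~\ref{lip:lemma} as an averaged Hamming distance of $x,y$ over a window whose length $q:=p+\diam$ grows like $\ell/\minf f$.

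First I would record two elementary estimates, valid for all $k\in\N$ and all sufficiently large $\ell$. From $\frac{k}{\minf f}\le m\le\frac{k}{\minf f}+1$ and $\frac{\ell+k}{\minf f}-1\le\ipart{\frac{\ell+k}{\minf f}}\le\frac{\ell+k}{\minf f}$ one gets
$$\frac{\ell}{\minf f}+\diam-3\ \le\ q\ \le\ \frac{\ell}{\minf f}+\diam-1,$$
so that $q\to\infty$ as $\ell\to\infty$ uniformly in $k$, and $\frac{q}{\ell}\le\frac1{\minf f}+\frac{\diam-1}{\ell}$. Since $\dhh(x_{\co{m}{m+q}},y_{\co{m}{m+q}})\le q\,g(q)$ by definition of $g$, dividing the inequality of Lemma~\ref{lip:lemma} by $\ell$ yields
$$\frac{\dhh(F(x)_{\co{k}{k+\ell}},F(y)_{\co{k}{k+\ell}})}{\ell}\ \le\ g(q)\,\frac{q}{\ell}\,\diam\maxd+\frac{2\minf f}{\ell}\ \le\ \Big(\sup_{q'\ge\ipart{\ell/\minf f+\diam-3}}g(q')\Big)\Big(\frac1{\minf f}+\frac{\diam-1}{\ell}\Big)\diam\maxd+\frac{2\minf f}{\ell}.$$
The right-hand side no longer depends on $k$, hence it also bounds $\max_{k\in\N}$ of the left-hand side.

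To conclude I would take $\limsup_{\ell\to\infty}$ of both sides. The map $a\mapsto\sup_{q'\ge a}g(q')$ is non-increasing with limit $\limsup_{q'\to\infty}g(q')=\weyl{H}(x,y)$, so the first factor on the right tends to $\weyl{H}(x,y)$; the second factor tends to $\frac1{\minf f}$; and the additive term tends to $0$. The left-hand side has $\limsup$ equal to $\weyl{H}(F(x),F(y))$, giving exactly $\weyl{H}(F(x),F(y))\le\frac{\diam\maxd}{\minf f}\weyl{H}(x,y)$. The computation is essentially routine once Lemma~\ref{lip:lemma} is available; the only genuine point of care is the passage to the limit, where one must check that the window length $q$ diverges (so that the averaged Hamming distances are controlled by $\weyl{H}(x,y)$ and not merely by some finite-window quantity) and that this divergence is uniform in $k$, which is precisely what permits taking the supremum over $k$ before, rather than after, the $\limsup$ over $\ell$.
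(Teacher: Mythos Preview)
Your proof is correct and follows essentially the same route as the paper's: apply Lemma~\ref{lip:lemma}, divide by $\ell$, replace the Hamming term by a maximum over all starting positions, and pass to the $\limsup$. The only cosmetic difference is that the paper splits off the extra $\diam$ cells (working with window length $p$ and absorbing a $\diam^2\maxd$ into the error term), whereas you keep the full window $q=p+\diam$; your treatment of the uniformity in $k$ (bounding $q$ below by a quantity depending only on $\ell$, then replacing $g(q)$ by a supremum) is in fact more explicit than the paper's.
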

\begin{proof}
Let us prove that $F$ is $\frac{\diam\maxd}{\minf{f}}$-Lipschitz with respect to $\weyl{H}$. According to Lemma \ref{lip:lemma}, for large enough $\ell$, for $k\in \N$, $m=\spart{\frac{k}{\minf f}}$ and $p=\ipart{\frac{\ell+k}{\minf f}}-(m+1)$ we obtain: 
$$\dhh(F(x)_{\co{k}{k+\ell}},F(y)_{\co{k}{k+\ell}})\le \dhh(x_{\co{m}{m+p}},y_{\co{m}{m+p}})\diam\maxd+\diam^2\maxd+\minf f.$$
Hence: 
\begin{eqnarray*}		
  \dfrac{\dhh(F(x)_{\co{k}{k+\ell}},F(y)_{\co{k}{k+\ell}})}{\ell} &\le &\dfrac{\max_{h\in\N}\dhh (x_{\co{h}{h+p}},y_{\co{h}{h+p}})\diam\maxd+\diam^2\maxd+2\minf f}{\ell}\\
  &\le  &\dfrac{\diam\maxd}{\len{f}} \cdot\dfrac{\max_{h\in\N}\dhh (x_{\co{h}{h+p}},y_{\co{h}{h+p}})}{p}+\dfrac{\diam^2\maxd+2\minf f}{\ell}.
\end{eqnarray*}
Since this was true for every $k$ and since $p\to \infty$ when $\ell\to \infty$, we obtain: 
$$ \weyl{H}(F(x),F(y))\leq \frac{\diam\maxd}{\minf{f}} \weyl{H}(x,y).\popQED$$
\end{proof}
\begin{pro}\label{uniorconst}
Let $F$ be a dill map with diameter $\diam\in\Ns$ and local rule $f$. 
If $F_{\weyl{H}}$ is well-defined, then $F$ is either constant or uniform.
\end{pro}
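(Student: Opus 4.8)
The plan is to argue by contradiction: assume $F_{\weyl{H}}$ is well-defined, $F$ is not uniform, and $F$ is not $\weyl{H}$-constant, and build $x,y\in A^\N$ with $\weyl{H}(x,y)=0$ but $\weyl{H}(F(x),F(y))>0$. Non-uniformity gives $u,v\in A^{\diam}$ with $\len{f(u)}<\len{f(v)}$, and the first step is to isolate a \emph{length-changing local edit}: words $W,W'\in A^*$ of equal length, sharing the same length-$(\diam-1)$ prefix and the same length-$(\diam-1)$ suffix, with $\len{f^*(W)}\ne\len{f^*(W')}$. Replacing an occurrence of $W$ inside a configuration by $W'$ leaves untouched every factor $f(\cdot)$ of the image to the left of it and the \emph{contents} of all factors to its right, merely translating the whole right-hand tail of the image by the fixed nonzero amount $\delta:=\len{f^*(W')}-\len{f^*(W)}$. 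This is cleanest after recoding $F=\bar\tau\circ\Phi$, where $\Phi$ is the sliding $\diam$-block code (a uniform dill map, hence $\weyl{H}$-Lipschitz by Proposition~\ref{p:subbesi}, with $\weyl{H}(\Phi x,\Phi y)$ and $\weyl{H}(x,y)$ within a factor $\diam$ of one another) and $\bar\tau$ is the substitution over $A^{\diam}$ with $\tau(w)=f(w)$: then a well-defined $F$ forces $\bar\tau$ to be well-defined on the subshift $\Phi(A^\N)$, non-uniform, non-$\weyl{H}$-constant, and a single-letter swap $u\mapsto v$ does the job with $\delta=\len{f(v)}-\len{f(u)}$.

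Non-$\weyl{H}$-constancy gives $a,b\in A^\N$ with $\weyl{H}(F(a),F(b))=c>0$; in fact one can choose $a,b$ \emph{periodic} with $F(a),F(b)$ periodic and lying in distinct shift-orbits. (For a substitution $\bar\tau$: since two periodic configurations are $\weyl{H}$-close only if equal, non-$\weyl{H}$-constancy means the words $\tau(\alpha)$ are not all powers of one common primitive word, and then either the configurations $\uinf{\alpha}$ already have images of distinct minimal periods, or, comparing $\uinf{\alpha}$ with $\uinf{(\alpha\beta)}$ for suitable $\alpha,\beta$, the latter has strictly larger minimal period.) For such a pair there is $c_0>0$ so that any length-$L$ factor of $F(a)$ disagrees with any length-$L$ factor of $F(b)$ in at least $c_0L-O(1)$ positions, \emph{at every relative offset} — this is what lets us dispense with controlling phases below.

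The construction of $x$ is then the usual wobbling. Fix $m_i\uparrow\infty$ and $N_i\uparrow\infty$ with $N_i$ growing far faster, and let $x$ be the concatenation, for $i=1,2,\dots$, of a block of $\approx N_i$ symbols copied from $b$ (a ``$b$-region'') followed by a block of $\approx m_i$ symbols copied from $a$ (an ``$a$-block''); inside the $i$-th $b$-region plant $K_i$ widely separated, fixed-context occurrences of $W$, with $K_i$ chosen so that the output shift $\delta\sum_{j<i}K_j$ accumulated before the $i$-th $a$-block equals the length of that $a$-block's image. Let $y$ be $x$ with every planted $W$ replaced by $W'$. The $a$-blocks, the $b$-region boundaries and the planted occurrences all sit on a set of Banach density $0$, so $\weyl{H}(x,y)=0$. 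But in $F(y)$ the image of the $i$-th $a$-block is a length-$L_i$ factor of $F(a)$ translated just far enough that, in $F(x)$, the same window is occupied by the image of the adjacent $b$-region — which, off an $o(L_i)$-density set coming from the planted edits and from $\le\maxf f$-length trimming, reads a length-$L_i$ factor of $F(b)$. So this window forces $\weyl{H}(F(x),F(y))\ge c_0>0$, contradicting well-definedness; hence $F$ is constant or uniform.

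The step I expect to be the real obstacle is the first one: producing a length-changing edit that moves the image by a \emph{controlled, nonzero} amount. For $\diam>1$ the sliding window of $f$ causes boundary contamination around any edit, and one must either prove directly (a counting argument over left/right contexts) that non-uniformity forces some genuinely length-changing edit with matching $(\diam-1)$-boundaries, or pass to the $\diam$-block recoding as above, where it is immediate. The secondary nuisance is that $\weyl{H}(F(a),F(b))>0$ is only a $\limsup$ statement, so the alignment in the wobbling construction must meet windows where $F(a),F(b)$ are genuinely far — handled here by arranging $F(a),F(b)$ to be periodic and not shift-equivalent, so that \emph{every} alignment is far.
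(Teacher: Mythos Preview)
There is a genuine gap at exactly the step you flag as the obstacle. Non-uniformity does \emph{not} force the existence of equal-length words $W,W'$ sharing both a length-$(\diam-1)$ prefix and a length-$(\diam-1)$ suffix with $\len{f^*(W)}\ne\len{f^*(W')}$. This condition is precisely what the paper later calls \emph{diamond-uniformity}, and the paper's own example (the dill map with $\diam=2$, $f(aa)=ab$, $f(bb)=ba$, $f(ab)=a$, $f(ba)=bab$) is non-uniform yet diamond-uniform: one checks $\len{f^*(W)}=2(\len W-1)+[W_0{=}b]-[W_{\len W-1}{=}b]$, which depends only on $\len W$ and the two end letters. So no counting argument over contexts can work in general. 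The $\diam$-block recoding does not rescue this: after writing $F=\bar\tau\circ\Phi$, a single-letter swap $u\mapsto v$ in the alphabet $A^{\diam}$ takes you \emph{out} of the overlap subshift $\Phi(A^\N)$ (adjacent symbols of $\Phi(x)$ must share $\diam-1$ letters), so staying inside $\Phi(A^\N)$ is exactly the same constraint as finding a matching-boundary edit in the original alphabet. Your entire wobbling construction rests on planting such interior edits, so it cannot get started for diamond-uniform non-uniform $F$.

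The paper's proof sidesteps this by editing only at position $0$: it arranges $u,v$ of equal length with merely a common length-$(\diam-1)$ \emph{suffix} (always achievable by right-padding with $a^{\diam-1}$), so that $F(uz)$ and $F(vz)$ differ by a bounded prefix and a shift of the common tail by $k=\len{f^*(u)}-\len{f^*(v)}>0$. Then a clean dichotomy finishes the job. Either some $f^*(w)$ fails to be $k$-periodic, in which case the periodic $z=\uinf w$ already gives $\weyl{H}(F(uz),F(vz))>0$ while $\weyl{H}(uz,vz)=0$; or every $f^*(w)$ is $k$-periodic, so every $F(x)$ is $k$-periodic and hence determined by $x_{\co0{k+\diam}}$, and then well-definedness together with the fact that two periodic configurations at pseudo-distance $0$ are equal forces $F$ to be literally constant. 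Note this last point also covers a residual debt in your outline: your contradiction hypothesis is ``not $\weyl{H}$-constant'', so even if the construction worked you would still owe the (admittedly short) argument that a $\weyl{H}$-constant dill map is actually constant.
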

\begin{proof}
Assume that $F$ is non-uniform, \ie there are two words $u$ and $v$ of equal length such that $\len{f^*(u)}\ne\len{f^*(v)}$.
	One can assume that their longest common suffix has length $\diam-1$.
	Indeed, otherwise let $a\in A$, $u'=u_{\co{\len u-\diam+1}{\len u}}a^{\diam-1}$ and $v'=v_{\co{\len u-\diam+1}{\len v}}a^{\diam-1}$; one can note that $f^*(ua^{\diam-1})=f^*(u)f^*(u')$ and $f^*(va^{\diam-1})=f^*(v)f^*(v')$, so that either $\len{f^*(ua^{\diam-1})}\ne\len{f^*(va^{\diam-1})}$, or $\len{f^*(u')}=\len{f^*(ua^{\diam-1})}-\len{f^*(u)}\ne\len{f^*(va^{\diam-1})}-\len{f^*(v)}=\len{f^*(v')}.$ 
	Note that both of these pairs of words share a common suffix of length at least $\diam-1$.
	Assume also without loss of generality that $k=\len{f^*(u)}-\len{f^*(v)}>0$.
	\begin{itemize}
		\item First assume that there exist $w\in A^*$ and $i\in \N$ such that $f^*(w)_i\neq f^*(w)_{i+k}$.
		 By our previous assumption, we know that for $z=\uinf w$ and $w'=u_{\co{\len u-\diam}{\len u}}z_{\co0\diam}=v_{\co{\len v-\diam}{\len v}}z_{\co0\diam}$ we have: 
		 $F(uz)=f^*(u)f^*(w')F(z)$. 
		 According to the proof of \cite[Theorem 20]{ramdhane2022cellular}, we obtain: 
$\weyl{H}(F(uz),F(vz))\ge \besi{H}(F(uz),F(vz))\ge\frac1{|f^*(z_{\co0{\len w+\diam}})|}.$
		 Since $\len{u}=\len{v}$, $\weyl{H}(uz,vz)=0$, so that $F$ is not well-defined with respect to $\weyl{L}$.
		\item Otherwise, for all $w\in A^*$, $i\in \co{0}{\len{f^*(w)}-k}$, we have $f^*(w)_i=f^*(w)_{i+k}$.
              Then for every $x\in A^\N$, $F(x)$ is $k$-periodic and thus $F(x)=(f^*(x_{\co{0}{k+\diam}})_{\co{0}{k}})^\infty$.
                  Assuming that $F$ is well-defined, we get $\besi{H}(F(x),F(0^{k+\diam}x_{\co{k+\diam}\infty}))=0$. 
                According to Proposition \cite[Proposition 3]{cattaneo1997shift}, we can deduce that $F(x)=F(0^{k+\diam}x_{\co{k+\diam}\infty})$. 
                  Then, for every $x\in A^\N$, $F(x)
                  =\uinf{(f^*(0^{k+\diam})_{\co{0}{k}})}$.
		Hence, $F$ is constant.
			\popQED
\end{itemize}
\end{proof}
We now reach necessary and sufficient conditions for dill maps to induce well-defined maps over this space.
\begin{teo}\label{dill Besi}
Let $F$ be a dill map with diameter $\diam$ and local rule $f$. Then the following statements are equivalent: 
\begin{enumerate}
\item\label{i:def} $F_{\weyl{H}}$ is well-defined.
\item\label{i:lip} $F$ is $\frac{\diam\maxd}{\minf f}$-Lipschitz with respect to $\weyl{H}$. 
\item\label{i:cst} $F$ is either constant or uniform.
\end{enumerate}
\end{teo}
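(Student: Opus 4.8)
The plan is to close the cycle of implications $(\ref{i:lip})\implies(\ref{i:def})\implies(\ref{i:cst})\implies(\ref{i:lip})$, so that the substance of the theorem is already contained in Propositions~\ref{p:subbesi} and~\ref{uniorconst}.

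The implication $(\ref{i:lip})\implies(\ref{i:def})$ is immediate from the definitions: if $F$ is Lipschitz for $\weyl{H}$ with any constant $C\ge 0$, then $\weyl{H}(x,y)=0$ forces $\weyl{H}(F(x),F(y))\le C\,\weyl{H}(x,y)=0$, which is exactly the condition for $F$ to preserve $\sim_{\weyl{H}}$, hence for $F_{\weyl{H}}$ to be well-defined. The implication $(\ref{i:def})\implies(\ref{i:cst})$ is precisely the content of Proposition~\ref{uniorconst}.

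For $(\ref{i:cst})\implies(\ref{i:lip})$ I would distinguish the two cases supplied by Proposition~\ref{uniorconst}. If $F$ is uniform, then $\maxd$ is well-defined and Proposition~\ref{p:subbesi} gives exactly $\weyl{H}(F(x),F(y))\le\frac{\diam\maxd}{\minf f}\weyl{H}(x,y)$ for all $x,y\in A^\N$. If $F$ is constant, then $F(x)=F(y)$ for all $x,y$, so $\weyl{H}(F(x),F(y))=0$ and $F$ is Lipschitz with constant $0$; since $\maxd$ was introduced only for uniform maps, this is the intended reading of statement~(\ref{i:lip}) in that degenerate case. This completes the cycle.

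I do not anticipate any genuine obstacle here: the theorem merely synthesizes Propositions~\ref{p:subbesi} and~\ref{uniorconst} with the trivial observation that a Lipschitz map descends to the quotient, and the only mild subtlety is the reading of~(\ref{i:lip}) in the constant case just noted.
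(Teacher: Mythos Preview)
Your proposal is correct and follows exactly the same cycle of implications as the paper: (\ref{i:lip})$\implies$(\ref{i:def}) is trivial, (\ref{i:def})$\implies$(\ref{i:cst}) is Proposition~\ref{uniorconst}, and (\ref{i:cst})$\implies$(\ref{i:lip}) is Proposition~\ref{p:subbesi}. You are in fact slightly more careful than the paper, which invokes Proposition~\ref{p:subbesi} for (\ref{i:cst})$\implies$(\ref{i:lip}) without separately commenting on the constant (non-uniform) case where $\maxd$ is not defined; your remark that the constant case gives Lipschitz constant $0$ is the right way to read it.
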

\begin{proof}~ \ref{i:lip}$\implies$\ref{i:def} is clear from the definition of Lipschitz maps.
  Implication \ref{i:cst}$\implies$\ref{i:lip} follows from Proposition~\ref{p:subbesi}.
  Implication \ref{i:def}$\implies$\ref{i:cst} follows from Proposition \ref{uniorconst}.
\end{proof}
 
\section{Lipschitz property of dill maps with respect to $\weyl{L}$}
In \cite{ramdhane2022cellular}, we proved that all dill maps are well-defined in the Feldman-Katok space. However, by changing the Feldman-Katok pseudo-metric for $\weyl{L}$, one can find that not all dill maps are well-defined. See for instance the following example:
\begin{ex}
Let $\tau$ be a substitution defined over $A=\{0,1\}^\N$ by $\tau(0)=0$ and $\tau(1)=11$.
Let $x=(0,1)^{(n,n)_{n\in\Ns}}$ and $y=(0,1)^{(n+1,n-1)_{n\in\Ns}}.$
Note that for all $j\in\N$, since $s_j:=j(j+1)=\sum_{i=1}^{j}2i$, we obtain: $d_H(x_{\co{s_j}{s_{j+1}}},y_{\co{s_j}{s_{j+1}}})=1.$
Let $\ell\in\Ns$ and $k\in\N$.
For $p=\min\sett{j\in\N}{s_j \ge k}$, $m=\max\sett{j\in\N}{s_j\le k+\ell}$ and by subadditivity:
\begin{eqnarray*}
d_H(x_{\co{k}{k+\ell}},y_{\co{k}{k+\ell}})&=& d_H(x_{\co{k}{s_p}},y_{\co{k}{s_p}})+d_H(x_{\co{s_p}{s_m}},y_{\co{s_p}{s_m}})+d_H(x_{\co{s_m}{k+\ell}},y_{\co{s_m}{k+\ell}}) \\
&\le & 1+d_H(x_{\co{s_p}{s_m}},y_{\co{s_p}{s_m}}) +1\le (m-p)+2.
\end{eqnarray*} 
Moreover, since $\ell+k\ge m^2+m$ and $k\le p^2+p$, we obtain $\frac{m-p}{\ell}\le \frac{1}{m+p+1}$, and thus: 
$$\dfrac{d_H(x_{\co{k}{k+\ell}},y_{\co{k}{k+\ell}})}{\ell}\le \dfrac{m-p+2}{\ell}\le \dfrac{2}{m+p+1}+\dfrac{2}{\ell}.$$
Since $m$ tends to $\infty$ when $\ell$ tends to $\infty$:
$\lim_{\ell\to\infty}\dfrac{d_H(x_{\co{k}{k+\ell}},y_{\co{k}{k+\ell}})}{\ell}=0, \forall k\in\N.$\\
Thus $\weyl{L}(x,y)=\weyl{H}(x,y)=0$.
On the other hand, it is clear that: $$\overline{\tau}(x)=(0,1)^{(n,2n)_{n\in\Ns}} \text{ and }\overline{\tau}(y)=(0,1)^{(n+1,2(n-1))_{n\in\Ns}}.$$
For $\ell\in\Ns$ and $k=\sum_{i=1}^{\ell}(i+1)+\sum_{i=1}^{\ell}2(i-1)$, hence: 
$\overline{\tau}(y)_{\co{k}{k+\ell}}=0^\ell.$
In contrast, since $\left(\sum_{i=1}^{\ell}i+\sum_{i=1}^{\ell}2i \right)-\ell=k$, we obtain $\overline{\tau}(x)_{\co{k}{k+\ell}}=1^\ell$. 
Thus: $$d_{L}(\overline{\tau}(x)_{\co{k}{k+\ell}},\overline{\tau}(y)_{\co{k}{k+\ell}})=d_L(0^\ell,1^\ell)=\ell.$$ 
Hence, $\max_{h\in\N}d_{L}(\overline{\tau}(x)_{\co{h}{h+\ell}},\overline{\tau}(y)_{\co{h}{h+\ell}})=\ell$. 
Therefore, $\weyl{L}(\overline{\tau}(x), \overline{\tau}(y))=1$.
\\
In conclusion, $\overline{\tau}$ is not well-defined with respect ot $\weyl{L}$.
\end{ex}
Before giving a caracterization of well-defined dill-maps with respect to $\weyl{L}$, let us recall a result from \cite{ramdhane2022cellular} to be used in the proof of the main result of this section. 
\begin{lem}[\cite{ramdhane2022cellular}]\label{l:Lev}
	Let $F$ be a dill map with diameter $\diam$ and local rule $f$.
	Then for all $\ell\in\N$ and $u,v\in A^\ell$, we have:
	$\dll(f^*(u),f^*(v))\leq\maxf f(2\diam-1)\dll(u,v)-\frac{\abs{\len{f^*(u)}-\len{f^*(v)}}}2.$
\end{lem}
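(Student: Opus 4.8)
The plan is to pass to longest common subsequences. First I would record the elementary reformulation of the Levenshtein distance: for words $a,b$, any family of deletions making $a$ and $b$ equal produces a word that is a subsequence of both, and conversely every common subsequence is obtained this way (delete, in increasing order of position, the complement of the positions spelling it), so $\dll(a,b)=\frac{\len{a}+\len{b}}2-c(a,b)$, where $c(a,b)$ is the greatest length of a word obtainable from $a$ by deletions and from $b$ by deletions. When $\ell<\diam$ we have $f^*(u)=f^*(v)=\lambda$ and the inequality is clear, so I would assume $\ell\ge\diam$ and set $d=\dll(u,v)=\ell-c(u,v)$. Then I would fix an optimal common subsequence $w=u_{i_1}\cdots u_{i_{\ell-d}}=v_{i'_1}\cdots v_{i'_{\ell-d}}$ with $i_1<\cdots<i_{\ell-d}$ and $i'_1<\cdots<i'_{\ell-d}$, and call a position of $u$ (resp.\ of $v$) \emph{kept} when it is one of the $i_s$ (resp.\ of the $i'_s$); there are exactly $d$ non-kept positions on each side.

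Writing $f^*(u)=B^u_0\cdots B^u_{\ell-\diam}$ with $B^u_j=f(u_{\co j{j+\diam}})$ (and likewise for $v$), each block having length at most $\maxf f$, I would call an index $j$ \emph{clean} when the window $\co j{j+\diam}$ consists of $\diam$ consecutive kept positions of $u$ — necessarily $\{i_t,\dots,i_{t+\diam-1}\}$ with $i_{t+s}=j+s$ — and moreover the window $\co{i'_t}{i'_t+\diam}$ consists of kept positions of $v$ (which forces $i'_{t+s}=i'_t+s$). For a clean $j$ both windows spell out $w_t\cdots w_{t+\diam-1}$, hence $B^u_j=B^v_{i'_t}$, and the map sending a clean $j$ to $i'_t$ is strictly increasing. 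So concatenating the blocks $B^u_j$ over clean $j$, in increasing order, gives a word obtained from $f^*(u)$ by deleting the non-clean blocks and obtained from $f^*(v)$ by deleting every block except the matched ones $B^v_{i'_t}$; therefore $c(f^*(u),f^*(v))\ge\sum_{j\text{ clean}}\len{B^u_j}$.

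The heart of the argument, and where I expect the only real difficulty, is to bound the number of non-clean indices. A non-clean $j$ either (i) has a window meeting the non-kept positions of $u$, or (ii) has an all-kept window $\{i_t,\dots,i_{t+\diam-1}\}$ whose image window $\co{i'_t}{i'_t+\diam}$ contains a non-kept position of $v$, which must then lie strictly between $i'_t$ and $i'_{t+\diam-1}$. Each non-kept position of $u$ lies in at most $\diam$ windows, bounding the indices of type (i) by $\diam d$; and a fixed non-kept position of $v$ can lie strictly between $i'_t$ and $i'_{t+\diam-1}$ for at most $\diam-1$ values of $t$, bounding the indices of type (ii) by $(\diam-1)d$ — it is precisely this asymmetric count that produces the constant $2\diam-1$ rather than $2\diam$. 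Hence at most $(2\diam-1)d$ indices are non-clean, so $\sum_{j\text{ clean}}\len{B^u_j}\ge\len{f^*(u)}-(2\diam-1)d\maxf f$, and since the matched blocks account for all but at most $(2\diam-1)d$ of the blocks of $f^*(v)$, also $\ge\len{f^*(v)}-(2\diam-1)d\maxf f$. Thus $c(f^*(u),f^*(v))\ge\max\bigl(\len{f^*(u)},\len{f^*(v)}\bigr)-(2\diam-1)d\maxf f$, and plugging this into $\dll(f^*(u),f^*(v))=\frac{\len{f^*(u)}+\len{f^*(v)}}2-c(f^*(u),f^*(v))$ together with the identity $\frac{\alpha+\beta}2-\max(\alpha,\beta)=-\frac{\abs{\alpha-\beta}}2$ yields exactly $\dll(f^*(u),f^*(v))\le\maxf f(2\diam-1)\dll(u,v)-\frac{\abs{\len{f^*(u)}-\len{f^*(v)}}}2$.
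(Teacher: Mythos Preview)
Your argument is correct. The reformulation $\dll(a,b)=\frac{\len a+\len b}2-c(a,b)$ via longest common subsequences is valid with the paper's definition of $\dll$, your notion of ``clean'' window is well-posed, the counting of type~(i) and type~(ii) non-clean indices is accurate (including the key observation that a non-kept position of $v$ lies \emph{strictly} between $i'_t$ and $i'_{t+\diam-1}$, which is what yields $\diam-1$ rather than $\diam$ on that side), and the final algebra is right.

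There is nothing to compare against here: in this paper the lemma is quoted from \cite{ramdhane2022cellular} without proof, so your write-up stands on its own. The longest-common-subsequence approach you take is the natural one for this kind of statement and is essentially how such bounds are obtained in the cited work; I would only suggest, for polish, making explicit that a clean $j$ determines a valid block index $i'_t\le\ell-\diam$ on the $v$ side (this follows from $i'_{t+\diam-1}=i'_t+\diam-1\le\ell-1$), so that the matched blocks really sit inside $f^*(v)$.
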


Now let us characterize dill maps which induce a well-defined function over $X_{\weyl{L}}$.
\begin{defe}
We say that a dill map with diameter $\diam$ and local rule $f$ is \emph{diamond-uniform} if for every $\ell$ and $u,v\in A^\ell$ such that $u_{\co0\theta}=v_{\co0\theta}$ and $u_{\co{\ell-\theta}\ell}=v_{\co{\ell-\theta}\ell}$, one has $\len{f^*(u)}=\len{f^*(v)}$.
\end{defe}
It is clear that all uniform dill maps are diamond-uniform. Here is an example of non-uniform dill map which is diamond-uniform.
\begin{ex}
Let $F$ be the dill map with diameter $\diam=2$ and local rule $f$ defined by:
$$f(aa)=ab, f(bb)=ba, f(ab)=a \text{ and } f(ba)=bab.$$
It is clear that for any $x\in A^\N$, $F(x)=(ab)^\infty$ if $x$ start by the letter $a$ and $F(x)=(ba)^\infty$ otherwise.
And thus, $F$ is neither constant nor uniform.
However, $F$ is ${\weyl{L}}$-constant, since for every $x,y\in A^\N$, $\weyl{L}(F(x),F(y))=0$. So, $F_{\weyl{L}}$ is well-defined.
\end{ex}
%
%

\begin{teo}\label{t:carac}
If $F$ is a dill map, then $F_{\weyl{L}}$ is well-defined if and only if $F$ is $\weyl{L}$-constant or diamond-uniform.
\end{teo}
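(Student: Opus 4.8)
The plan is to prove the two implications separately, the "if" direction splitting into two cases. If $F$ is $\weyl{L}$-constant then $\weyl{L}(F(x),F(y))=0$ for every $x,y$, so $F_{\weyl{L}}$ is trivially well-defined. Assume instead that $F$ is diamond-uniform and that $\weyl{L}(x,y)=0$; I want $\weyl{L}(F(x),F(y))=0$. Fix a large $\ell$, an index $k\in\N$, and consider the window $F(x)_{\co k{k+\ell}}$. Since $\weyl{L}(x,y)=0$, on a slightly larger input window there is an optimal edit alignment of $x$ with $y$ using only $o(\ell)$ edits; from it I would extract an \emph{anchor}: a position $a$ such that $x$ and $y$ coincide on $x_{\co0\diam}$ and on the length-$\diam$ factor preceding $a$, and for which therefore $\len{f^*(x_{\co0a})}=\len{f^*(y_{\co0a})}$ — this last equality being exactly what diamond-uniformity provides. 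Such an anchor then occupies the same coordinate in $F(x)$ and in $F(y)$. Bracketing $F(x)_{\co k{k+\ell}}$ and $F(y)_{\co k{k+\ell}}$ between two anchors, each window becomes $f^*$ of an input window of a common length, up to overhangs of length $<\maxf f$ on each side; Lemma~\ref{l:Lev} then bounds $\dll(F(x)_{\co k{k+\ell}},F(y)_{\co k{k+\ell}})$ by $\maxf f(2\diam-1)$ times the edit distance of those two equal-length windows, plus $O(\maxf f)$, hence by $o(\ell)$ uniformly in $k$; so $\weyl{L}(F(x),F(y))=0$.

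For the converse I argue by contraposition: supposing $F$ is neither $\weyl{L}$-constant nor diamond-uniform, I build $x,y$ with $\weyl{L}(x,y)=0$ but $\weyl{L}(F(x),F(y))>0$. From the failure of diamond-uniformity, walking one letter at a time between the two witness words produces $u,v$ differing in a single position $i\in\co\diam{\len u-\diam}$, with $u_{\co0\diam}=v_{\co0\diam}$, $u_{\co{\len u-\diam}{\len u}}=v_{\co{\len u-\diam}{\len u}}$ and $c\defeq\len{f^*(u)}-\len{f^*(v)}>0$. Because the two coinciding $\diam$-affixes govern every junction, replacing an occurrence of $u$ by $v$ inside a configuration leaves every surrounding $f^*$-block unchanged and shifts everything downstream to the left by exactly $c$. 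The construction then follows the example preceding the statement: $x$ and $y$ agree outside successive \emph{stages}, stage $n$ of $x$ being a run of $b_n$ widely spaced copies of $u$ and stage $n$ of $y$ the corresponding run of copies of $v$, the surrounding material being a carefully chosen background. The intra-stage spacing, the sizes $b_n$, and the gaps between stages are tuned so that the $u/v$-discrepancies occupy a vanishing fraction of any window — forcing $\weyl{L}(x,y)=0$ — while stage $n$ injects a drift $\approx c\,b_n$ between $F(x)$ and $F(y)$, so the accumulated drift is unbounded.

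It remains to choose the background so that this drift is detected by $\dll$, and here "$F$ not $\weyl{L}$-constant" enters through a dichotomy parallel to the proof of Proposition~\ref{uniorconst}. Either some word $w$ satisfies $f^*(w)_j\neq f^*(w)_{j+c}$ for an admissible $j$; then a background built from $w$, whose period is kept comparable to the running drift, falls out of phase with itself, and one finds windows on which $F(y)$ is, up to $o(\ell)$ symbols, a run of a single letter while $F(x)$ displays genuinely varying content, so that $\dll$ there is of the order of the window length. Or every image $f^*(w)$ is $c$-periodic; then every $F(z)$ equals $(F(z)_{\co0c})^\infty$ and is determined by a bounded prefix of $z$, so $F$ failing to be $\weyl{L}$-constant forces two such length-$c$ periods $P,Q$ to be non-conjugate, and choosing $x,y$ that agree outside a bounded prefix (whence $\weyl{L}(x,y)=0$) and realize $P$ and $Q$ gives $\weyl{L}(F(x),F(y))=\weyl{L}(P^\infty,Q^\infty)>0$, via a Fine--Wilf-type estimate showing that powers of non-conjugate words of equal length are at Levenshtein distance linear in the exponent. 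The main obstacle is the first branch: one must simultaneously keep the $x$-vs-$y$ discrepancies sparse at every scale, so that $\weyl{L}(x,y)=0$, and let the accumulated drift grow at least as fast as the scale at which the background image forms a long single-letter run — the very tension that makes the model example work, but now with the required block structure having to be engineered from $u,v,c$ rather than being handed over. A minor subtlety in the first part is that $\weyl{L}(x,y)=0$ controls only an edit alignment, not positional agreement, so the anchor must be read off from that alignment, and the (few) configurations with no long positional coincidence — which up to $\weyl{L}$-equivalence are shifts of one another — are dealt with directly, using that $\weyl{L}$ is insensitive to shifts.
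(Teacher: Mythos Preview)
Your ``if'' direction has a genuine gap. You want an anchor $a$ with $x_{\co0\diam}=y_{\co0\diam}$ and $x_{\co{a-\diam}a}=y_{\co{a-\diam}a}$, so that diamond-uniformity forces $\len{f^*(x_{\co0a})}=\len{f^*(y_{\co0a})}$. But $\weyl{L}(x,y)=0$ is a purely edit-based condition and yields no positional agreement whatsoever: take $x=(01)^\infty$ and build $y$ by inserting one or two letters at positions $10,10^2,10^3,\ldots$ so that the cumulative shift stays odd; then $\weyl{L}(x,y)=0$ while $x_i\ne y_i$ for all $i$ outside a zero-density set, and no fixed shift of $y$ restores positional agreement. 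Your proposed patch (``up to $\weyl{L}$-equivalence are shifts of one another'') is either vacuous, since every configuration is $\weyl{L}$-equivalent to all its shifts, or false if read as an equality. The paper sidesteps the whole issue via Lemma~\ref{c:di-uni}: diamond-uniformity is \emph{equivalent} to a uniform bound $\bigl|\len{f^*(u)}-\len{f^*(v)}\bigr|\le C$ valid for \emph{all} equal-length $u,v$, with no hypothesis on common affixes. With that bound in hand, Proposition~\ref{p:subweyl} gives a straight Lipschitz estimate by sandwiching each output window between $f^*$-images of equal-length input windows, no anchors required.

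Your ``only if'' direction is headed the hard way. The periodicity dichotomy you borrow from Proposition~\ref{uniorconst} was forced there because a pure phase shift is invisible to $\besi{H}$ only on periodic configurations; under $\weyl{L}$ a phase shift is \emph{always} invisible, so the dichotomy is unnecessary and your ``first branch'', which you yourself identify as the main obstacle, can be avoided entirely. The paper's route is direct: take the witnesses $x,y$ with $\weyl{L}(F(x),F(y))>0$ and the diamond pair $u,v$ with $\alpha=\len{f^*(u)}-\len{f^*(v)}>0$, and interleave them as
\[
z=u\,x_{\co0\alpha}y_{\co0\alpha}\,u\,x_{\co0{2\alpha}}y_{\co0{2\alpha}}\,u\,x_{\co0{3\alpha}}y_{\co0{3\alpha}}\cdots,\qquad
w=v\,x_{\co0\alpha}y_{\co0\alpha}\,v\,x_{\co0{2\alpha}}y_{\co0{2\alpha}}\cdots.
\]
Then $\weyl{H}(z,w)=0$ because the $u/v$ discrepancies have density $O(1/n)$ at stage $n$; and since $u,v$ share their $(\diam-1)$-affixes, each replacement of $u$ by $v$ shifts the entire downstream output by exactly $\alpha$, so at stage $n$ the images of the $x_{\co0{n\alpha}}$- and $y_{\co0{n\alpha}}$-blocks sit at a relative offset of $n\alpha$ in $F(z)$ versus $F(w)$, forcing the maximising pair $(F(x)_{\co{k_\ell}{k_\ell+\ell}},F(y)_{\co{k_\ell}{k_\ell+\ell}})$ to occur at a common position of $(F(z),F(w))$ once $n\alpha\ge k_\ell+\ell$. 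No background engineering, no Fine--Wilf estimate, no case split.
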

In order to prove this, here is the key point about diamond-uniformity.
\begin{lem}\label{c:di-uni}
If $F$ is a dill map with diameter $\diam$ and local rule $f$, then the following statements are equivalent:
  \begin{enumerate}
  \item\label{i:dunif} $F$ is diamond-uniform.
  \item\label{i:aunif} $\len{f^*(u)}-\len{f^*(v)}$ is uniformly bounded, for every $u,v$ with equal length.
  \end{enumerate}\end{lem}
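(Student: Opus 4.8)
The plan is to prove the two implications separately, with the easy direction being \ref{i:aunif}$\implies$\ref{i:dunif} and the substantive direction being \ref{i:dunif}$\implies$\ref{i:aunif}.

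For \ref{i:aunif}$\implies$\ref{i:dunif}, I would argue by contraposition. If $F$ is not diamond-uniform, then there exist $\ell$ and $u,v\in A^\ell$ agreeing on their first $\diam$ and last $\diam$ letters with $\len{f^*(u)}\ne\len{f^*(v)}$; put $d=\len{f^*(u)}-\len{f^*(v)}\ne 0$. The key observation is that, since $u$ and $v$ share a prefix and a suffix of length $\diam$, one can concatenate copies: writing $w$ for the common suffix of length $\diam$, the words $u':=u w^{-1}\cdots$ — more carefully, form $U_n$ by gluing $n$ copies of the "core" of $u$ along the shared blocks and similarly $V_n$ from $v$, so that $\len{U_n}=\len{V_n}$ while $\len{f^*(U_n)}-\len{f^*(V_n)}=n\,d$ (using that $f^*$ of a concatenation overlapping in $\diam-1$ letters splits additively, exactly as in the manipulation $f^*(ua^{\diam-1})=f^*(u)f^*(u')$ used in the proof of Proposition~\ref{uniorconst}). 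Hence the difference of lengths is unbounded, contradicting \ref{i:aunif}.

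For \ref{i:dunif}$\implies$\ref{i:aunif}, the idea is that diamond-uniformity lets one "normalize" the endpoints of any word. Given $u,v\in A^\ell$, I would compare $\len{f^*(u)}$ and $\len{f^*(v)}$ each to a canonical value depending only on $\ell$: fix a letter $a\in A$ and set $g(\ell)=\len{f^*(a^\ell)}$ for $\ell\ge\diam$. For arbitrary $u\in A^\ell$, consider $u':=a^{\diam}u_{\co\diam{\ell-\diam}}a^{\diam}$ — wait, this changes length bookkeeping, so instead pad: compare $u$ with $a^{\diam}u a^{\diam}$ is not equal length either. The cleaner route is: diamond-uniformity says $\len{f^*(u)}$ depends only on $\len u$, $u_{\co0\diam}$ and $u_{\co{\ell-\diam}\ell}$; since there are only $\card A^{2\diam}$ choices for that pair of boundary blocks, for each fixed $\ell$ the value $\len{f^*(u)}$ takes at most $\card A^{2\diam}$ distinct values, so $\max_{u\in A^\ell}\len{f^*(u)}-\min_{v\in A^\ell}\len{f^*(v)}$ is finite for each $\ell$ — but I still must make this bound uniform in $\ell$. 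To do that, I would use additivity of $f^*$ over overlapping decompositions to reduce $\len{f^*(u)}-\len{f^*(v)}$, for $u,v$ of the same length, to a telescoping sum of boundary corrections of bounded size: changing one of the two $\diam$-length end-blocks of a word alters $\len{f^*}$ by at most $\maxf f\cdot\diam$ (only the at most $\diam$ factors $f(\cdot)$ that read a changed position are affected), and diamond-uniformity guarantees that once both pairs of end-blocks agree, the lengths are exactly equal. Therefore $\abs{\len{f^*(u)}-\len{f^*(v)}}\le 2\diam\maxf f$ uniformly, which is \ref{i:aunif}.

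The main obstacle is the uniformity of the bound in the second implication: diamond-uniformity is an equality statement for words sharing both end-blocks, and one must extract from it a single constant controlling the length discrepancy for all pairs of equal-length words simultaneously. The resolution is precisely the observation that the only freedom left after fixing the two end-blocks is killed by diamond-uniformity, while altering an end-block perturbs $\len{f^*}$ by at most $\diam\maxf f$; this is a finite, $\ell$-independent quantity, so the telescoping argument closes. I expect the bookkeeping with the overlapping decompositions of $f^*$ (making sure the $\diam-1$ overlaps are handled correctly, as in Proposition~\ref{uniorconst}) to be the only delicate point, but it is routine.
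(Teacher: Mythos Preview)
Your proposal is correct and follows essentially the same approach as the paper: both directions use the same ideas (concatenating powers $u^k,v^k$ for the contrapositive $\ref{i:aunif}\Rightarrow\ref{i:dunif}$, and normalizing the two boundary $\diam$-blocks to exploit diamond-uniformity for $\ref{i:dunif}\Rightarrow\ref{i:aunif}$). The only cosmetic difference is that the paper pads $u$ and $v$ by $a^{\diam-1}$ on both sides to force equal boundaries (then subtracts off the short correction terms) rather than modifying the end-blocks in place as you do; both routes yield a uniform bound independent of $\len u$.
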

\begin{proof}~
\begin{itemize}
  \item[$\ref{i:dunif}\Rightarrow\ref{i:aunif}$] Let $u,v$ have equal length.
    Then  
\begin{eqnarray*}    
    \len{f^*(u)}-\len{f^*(v)}&=&\len{f^*(a^{\theta-1}ua^{\theta-1})}-\len{f^*(a^{\theta-1}va^{\theta-1})}-\len{f^*(a^{\theta-1}u_{\co0{\theta-1}})}+ \\  &\quad +& \len{f^*(a^{\theta-1}v_{\co0{\theta-1}})}-\len{f^*(u_{\co{\len u-{\theta-1}}{\len u}}a^{\theta-1})}+\len{f^*(v_{\co{\len u-{\theta-1}}{\len u}}a^{\theta-1})}\\ &\le & 2(\theta-1)(\maxd-\mind). 
\end{eqnarray*}
  \item[$\ref{i:aunif}\Rightarrow\ref{i:dunif}$] Assume, by contrapositive, that there exist $\ell\in\N$ and $u,v\in A^\ell$ such that: $$u_{\co{0}{\diam-1}}=v_{\co{0}{\diam-1}}, u_{\co{\ell-\theta+1}\ell}=v_{\co{\ell-\theta+1}\ell} \text{ and }\len{f^*(u)}-\len{f^*(v)}\ne0.$$
      Then $\len{f^*(u^k)}-\len{f^*(v^k)}=(\len{f^*(u)}-\len{f^*(v)})k$, thanks to the common prefixes and suffixes.
      Hence $\len{f^*(u)}-\len{f^*(v)}$ is not bounded (and not upper-bounded, up to swapping $u$ and $v$).
\popQED\end{itemize}\end{proof}
\begin{pro}\label{p:subweyl}
Any diamond-uniform dill map $F$ with local rule $f$ and diameter $\diam$ is $(2\diam-1)\times\dfrac{\maxf f}{\minf f}$-Lipschitz with respect to $\weyl{L}$.
\end{pro}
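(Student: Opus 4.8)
The plan is to follow the pattern of the proof of Proposition~\ref{p:subbesi}, replacing Lemma~\ref{lip:lemma} by the Levenshtein estimate of Lemma~\ref{l:Lev} and using diamond-uniformity — through Lemma~\ref{c:di-uni} — to keep the block decompositions of $F(x)$ and $F(y)$ in sync. For $x\in A^\N$ and $m\in\N$, write $\sigma_x(m)=\len{f^*(x_{\co0{m+\diam-1}})}$ for the position in $F(x)$ where the image block $f(x_{\co m{m+\diam}})$ begins; the map $m\mapsto\sigma_x(m)$ is strictly increasing with consecutive increments between $\minf f$ and $\maxf f$. The key point is that Lemma~\ref{c:di-uni} provides a constant $C$ with $\abs{\len{f^*(u)}-\len{f^*(v)}}\le C$ for all equal-length $u,v$, hence $\abs{\sigma_x(m)-\sigma_y(m)}\le C$ for every $m$. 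Therefore, if a position $k$ lies in block $m_x$ of $F(x)$ and in block $m_y$ of $F(y)$, then $\abs{\sigma_y(m_x)-\sigma_y(m_y)}\le\abs{\sigma_y(m_x)-\sigma_x(m_x)}+\abs{\sigma_x(m_x)-k}+\abs{k-\sigma_y(m_y)}<C+2\maxf f$, so $\abs{m_x-m_y}<(C+2\maxf f)/\minf f=:C'$, a constant independent of $k$ (and likewise for the blocks containing position $k+\ell-1$).

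Now fix $\ell,k\in\N$ and let $m\le n$ be the blocks of $F(x)$ containing positions $k$ and $k+\ell-1$. Then $F(x)_{\co k{k+\ell}}$ is obtained from $f^*(x_{\co m{n+\diam}})=f(x_{\co m{m+\diam}})\cdots f(x_{\co n{n+\diam}})$ by chopping fewer than $\maxf f$ letters off each end, so $\dll(F(x)_{\co k{k+\ell}},f^*(x_{\co m{n+\diam}}))<\maxf f$. By the previous paragraph, positions $k$ and $k+\ell-1$ lie in blocks of $F(y)$ at distance $<C'$ from $m$ and from $n$; combining the same chopping argument with the fact that modifying a block range by fewer than $C'$ blocks at each end costs a bounded number of edits, one gets $\dll(F(y)_{\co k{k+\ell}},f^*(y_{\co m{n+\diam}}))\le D_0$ for a constant $D_0$ depending only on $f$ and $\diam$. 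Since $x_{\co m{n+\diam}}$ and $y_{\co m{n+\diam}}$ have equal length, Lemma~\ref{l:Lev} (discarding its nonpositive length-defect term) together with the triangle inequality yields
\[\dll(F(x)_{\co k{k+\ell}},F(y)_{\co k{k+\ell}})\le(2\diam-1)\maxf f\,\dll(x_{\co m{n+\diam}},y_{\co m{n+\diam}})+D,\]
with $D:=\maxf f+D_0$ independent of $k$ and $\ell$.

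The last step passes to the asymptotics while staying uniform in $k$, which is the real obstacle: the $\max_{k}$ in $\weyl{L}$ forbids any estimate that degrades as $k$ grows, and it is exactly the bound $C$ above that prevents the block boundaries of $F(x)$ and $F(y)$ from drifting apart — for a dill map that is not diamond-uniform this drift is unbounded and the argument collapses, as the substitution $\tau(0)=0$, $\tau(1)=11$ before the theorem shows. Set $L:=n-m+\diam$; since blocks $m+1,\dots,n-1$ sit inside $\co k{k+\ell}$ and have length $\ge\minf f$ each, one gets $L\le\ell/\minf f+1+\diam$, and since blocks $m,\dots,n$ cover $\co k{k+\ell}$ and have length $\le\maxf f$ each, one gets $L\ge\ell/\maxf f-1+\diam$; in particular $L/\ell\le1/\minf f+(1+\diam)/\ell$ and $L\to\infty$, both uniformly in $k$. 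Dividing the displayed inequality by $\ell$, bounding $\dll(x_{\co m{n+\diam}},y_{\co m{n+\diam}})\le\max_{h}\dll(x_{\co h{h+L}},y_{\co h{h+L}})$, and rewriting the right-hand side as $(2\diam-1)\maxf f\cdot\tfrac L\ell\cdot\tfrac1L\max_{h}\dll(x_{\co h{h+L}},y_{\co h{h+L}})+\tfrac D\ell$, we fix $\epsilon>0$, choose $L_0$ with $\tfrac1L\max_h\dll(x_{\co h{h+L}},y_{\co h{h+L}})\le\weyl{L}(x,y)+\epsilon$ for all $L\ge L_0$, and take $\ell$ large enough that $\ell/\maxf f-1+\diam\ge L_0$, so that $L\ge L_0$ for every $k$. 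Taking $\max_k$ and then $\limsup_{\ell\to\infty}$ gives $\weyl{L}(F(x),F(y))\le(2\diam-1)\tfrac{\maxf f}{\minf f}(\weyl{L}(x,y)+\epsilon)$, and letting $\epsilon\to0$ yields the claimed Lipschitz bound.
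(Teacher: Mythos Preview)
Your proof is correct and follows essentially the same route as the paper's: use Lemma~\ref{c:di-uni} to bound the drift between the block decompositions of $F(x)$ and $F(y)$, sandwich each window $F(\cdot)_{\co k{k+\ell}}$ between an $f^*$-image of an aligned subword and a bounded edit cost, apply Lemma~\ref{l:Lev}, divide by $\ell$, and pass to the $\limsup$. Your handling of the uniformity in $k$ via the explicit $\epsilon$ argument and the two-sided bound on $L$ is in fact more careful than the paper's version, which leaves that step somewhat implicit.
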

\begin{proof}
Let $x,y\in A^\N$.
For large enough $\ell$ and $k\in\N$ let us denote: 
\begin{eqnarray*}
m &=&\max\left(\min\sett{i\in\N}{|f^*(x_{\co{0}{i+\diam}})|\ge k}, \min\sett{i\in\N}{|f^*(y_{\co{0}{i+\diam}})|\ge k}\right) \text{, and } \\  p &=&\min\left(\max\sett{i\in\N}{|f^*(x_{\co{0}{m+p}})|\le k+\ell}, \min\sett{i\in\N}{|f^*(y_{\co{0}{m+p}})|\le k+\ell}\right).
\end{eqnarray*}
Since $F$ is a diamond-uniform dill map and thanks to \ref{i:aunif}, there exist $C>0$ such that: 
$$\len{\len{f^*(x_{\co{0}{m}})}-\len{f^*(y_{\co{0}{m}})}}\le C \text{ and } \len{\len{f^*(x_{\co{0}{m+p}})}-\len{f^*(y_{\co{0}{m+p}})}}\le C.$$
And thus we can write, 
$$F(x)_{\co{k}{k+\ell}}=uvf^*(x_{\co{m}{m+p}})wz \text{ and } F(y)_{\co{k}{k+\ell}}=u'v'f^*(y_{\co{m}{m+p}})w'z'$$
where $|u|,|u'|,|z|,|z'|<\maxf f$, and $|v|,|v'|,|w|,|w'|<C$.
Hence, by subadditivity: 
\begin{eqnarray*}
d_L(F(x)_{\co{k}{k+\ell}},F(y)_{\co{k}{k+\ell}})&\le & d_L(uv,u'v')+d_L(f^*(x_{\co{m}{m+p}}),f^*(y_{\co{m}{m+p}}))+d_L(wz,w'z') \\
&\le & 2(\maxf f+C)+d_L(f^*(x_{\co{m}{m+p}}),f^*(y_{\co{m}{m+p}})).
\end{eqnarray*}
According to Lemma \ref{l:Lev} we obtain:
$$d_L(F(x)_{\co{k}{k+\ell}},F(y)_{\co{k}{k+\ell}})\le 2(\maxf f+C)+(2\diam-1)\maxf f d_L(x_{\co{m}{m+p}},y_{\co{m}{m+p}}).$$
Since $\ell\ge  \len{f^*(x_{\co{m}{m+p}})}\ge (p-\diam)\minf f$ and by subadditivity: 
\begin{eqnarray*}
\dfrac{d_L(F(x)_{\co{k}{k+\ell}},F(y)_{\co{k}{k+\ell}})}\ell &\le & \dfrac{\maxf f(2+\diam(2\diam-1))+2C}{\ell}+\dfrac{(2\diam-1)\maxf f d_L(x_{\co{m}{m+p-\diam}},y_{\co{m}{m+p-\diam}})}{\minf f(p-\diam)}\\
&\le & \dfrac{\maxf f(2\diam^2-\diam+2)+2C}{\ell}+(2\diam-1)\frac{\maxf f}{\minf f}\times \max_{h\in \N}\dfrac{d_L(x_{\co{h}{h+p-\diam}},y_{\co{h}{h+p-\diam}})}{p-\diam}.
\end{eqnarray*}
Since this was true for every $k\in\N$ and since $p\to \infty$ when $\ell\to\infty$: 
$$\weyl{L}(F(x),F(y))\le (2\diam-1)\dfrac{\maxf f}{\minf f}\weyl{L}(x,y).\popQED$$
\end{proof}
\begin{proof}[Proof of Theorem~\ref{t:carac}]~According to Proposition \ref{p:subweyl}, if $F$ is diamond-uniform then $F_{\weyl{L}}$ is well-defined.
Suppose now that $F$ is neither  $\weyl{L}$-constant nor diamond-uniform \ie there exists $x,y\in A^\N$ such that $\weyl{L}(F(x),F(y))>0$, and there exists $m\in \N$, $u,v\in A^m$ such that $u_{\co{0}{\diam-1}}=v_{\co{0}{\diam-1}}$, $u_{\co{m-\diam+1}{m}}=v_{\co{m-\diam+1}{m}}$ and $\alpha\defeq\len{f^*(u)}-\len{f^*(v)}>0$.
  Let us define the following configurations:
  \begin{eqnarray*}
    z&\defeq & ux_{\co0\alpha}y_{\co0\alpha}ux_{\co0{2\alpha}}y_{\co0{2\alpha}}ux_{\co0{3\alpha}}y_{\co0{3\alpha}}\ldots, \text{ and } \\ w &\defeq & vx_{\co0\alpha}y_{\co0\alpha}vx_{\co0{2\alpha}}y_{\co0{2\alpha}}vx_{\co0{3\alpha}}y_{\co0{3\alpha}}\ldots. 
  \end{eqnarray*}
  Let $(k_\ell)_{\ell\in\N}$ such that for all $\ell\in\N$: $\max_{k\in\N}d_L(F(x)_{\co{k}{k+\ell}},F(y)_{\co{k}{k+\ell}})=d_L(F(x)_{\co{k_\ell}{k_\ell+\ell}},F(y)_{\co{k_\ell}{k_\ell+\ell}}).$
  This pair of patterns appears in $(z,w)$, in the zone where $n\alpha\ge k_\ell+\ell$.
  Hence $\weyl{L}(z,w)=0$ but $\weyl{L}(F(z),F(w))\ge \weyl{L}(F(x),F(y))>0$.
\end{proof}
\section{Conclusion and perspectives}
In this paper, we characterized well-defined dill maps over the Weyl space, indeed, we find the same result as in the case of Besicovitch space in \cite{ramdhane2022cellular}. In addition, we showed that not all dill maps are well-defined with respect to the sliding Feldman-Katok space, in contrast the Feldman-Katok space, where all dill maps are well-defined \cite[Corollary 46]{ramdhane2022cellular}.
Those spaces were constructed using two pseudo-metrics depending on two different edit distances over finite words (the Hamming distance and the Levenshtein distance). One natural question is which properties on distance $d$ make all dill maps are well-defined in the corresponding pseudo-metric space?
\bibliographystyle{alpha}
\bibliography{besiweyl}
\end{document}

Firas
⊕posexp : pér Feldman. maximum. shifted=max centered
- equivalent with Besicovitch
- distance sur les mots de même longueur (Ex 3.24)
Automata :
-both pseudometrics as natural additive ones
-behavior sliding pseudometrics

-open question positive expansiveness of some CA/dill map?